\newcommand{\vd}{{\bf d}}
\newcommand{\vg}{{\bf g}}
\newcommand{\vh}{{\bf h}}
\newcommand{\vn}{{\bf n}}
\newcommand{\vu}{{\bf u}}
\newcommand{\vy}{{\bf y}}
\newcommand{\zv}{{\bf 0}}
\newcommand{\cA}{{\cal A}}
\newcommand{\cC}{{\cal C}}
\newcommand{\cD}{{\cal D}}
\newcommand{\cN}{{\cal N}}
\newcommand{\gd}{\delta}
\newcommand{\bN}{{\mathbb{N}}}
\newenvironment{mat}[1]{\left[\begin{array}{#1}}{\end{array}\right]}
\newcommand{\mx}[1]{\begin{mat}{ccccccc}#1\end{mat}}
\newcommand{\mA}{{\bf A}}
\newcommand{\mB}{{\bf B}}
\newcommand{\mE}{{\bf E}}
\newcommand{\mF}{{\bf F}}
\newcommand{\mH}{{\bf H}}
\newcommand{\mI}{{\bf I}}
\newcommand{\mO}{{\bf O}}
\newcommand{\mP}{{\bf P}}
\newcommand{\mQ}{{\bf Q}}
\newcommand{\mR}{{\bf R}}
\newcommand{\mU}{{\bf U}}
\newcommand{\mW}{{\bf W}}
\newcommand{\ignore}[1]{}
\DeclareMathOperator{\vect}{vec}
\DeclareMathOperator{\e}{E}
\DeclareMathOperator{\diag}{diag}
\newcommand{\modd}[1]{\left\langle#1\right\rangle}
\newtheorem{theorem}{Theorem}
\begin{document}
%
% paper title
% can use linebreaks \\ within to get better formatting as desired
\newtheorem{corol}[theorem]{Corollary}
\title{Frequency-Domain Decoupling  for MIMO-GFDM Spatial Multiplexing}
%
%
% author names and IEEE memberships
% note positions of commas and nonbreaking spaces ( ~ ) LaTeX will not break
% a structure at a ~ so this keeps an author's name from being broken across
% two lines.
% use \thanks{} to gain access to the first footnote area
% a separate \thanks must be used for each paragraph as LaTeX2e's \thanks
% was not built to handle multiple paragraphs
%

\author{Ching-Lun Tai, %~\IEEEmembership{Student Member,~IEEE,}
        Borching Su, %~\IEEEmembership{Member,~IEEE,}
        and~Cai Jia%, ~\IEEEmembership{Student Member,~IEEE}% <-this % stops a space
%\thanks{M. Shell is with the Department of Electrical and Computer Engineering, Georgia Institute of Technology, Atlanta, GA, 30332 USA e-mail: (see http://www.michaelshell.org/contact.html).}% <-this % stops a space
% \thanks{J. Doe and J. Doe are with Anonymous University.}% <-this % stops a space
%\thanks{Manuscript received April 19, 2005; revised January 11, 2007.}
}

\maketitle

\begin{abstract}
Generalized frequency division multiplexing (GFDM) is considered a non orthogonal waveform and can cause difficulties when used in the spatial multiplexing mode of a multiple-input-multiple-output (MIMO) scenario.
In this paper, a class of GFDM prototype filters, in which the GFDM system is free from inter-subcarrier interference, is investigated, enabling frequency-domain decoupling during processing at the GFDM receiver.
An efficient MIMO-GFDM detection method based on depth-first sphere decoding is subsequently proposed with this class of filters. 
Numerical results confirm a significant reduction in complexity, especially when the number of subcarriers is large, compared with existing methods.
\end{abstract}
% IEEEtran.cls defaults to using nonbold math in the Abstract.
% This preserves the distinction between vectors and scalars. However,
% if the journal you are submitting to favors bold math in the abstract,
% then you can use LaTeX's standard command \boldmath at the very start
% of the abstract to achieve this. Many IEEE journals frown on math
% in the abstract anyway.

% Note that keywords are not normally used for peerreview papers.
\begin{IEEEkeywords}
%IEEEtran, journal, \LaTeX, paper, template.
Frequency-domain (FD) decoupling,
multiple-input-multiple-output generalized frequency division multiplexing (MIMO-GFDM),
spatial multiplexing (SM),
%error-complexity-tradeoff, 
depth-first sphere decoding (DFSD), 
sorted QR decomposition (SQRD), 
symbol error rate (SER).
\end{IEEEkeywords}

% For peer review papers, you can put extra information on the cover
% page as needed:
% \ifCLASSOPTIONpeerreview
% \begin{center} \bfseries EDICS Category: 3-BBND \end{center}
% \fi
%
% For peerreview papers, this IEEEtran command inserts a page break and
% creates the second title. It will be ignored for other modes.
\IEEEpeerreviewmaketitle

\section{Introduction}

\ignore{

%\blindtext
As a candidate waveform for future wireless communication systems, generalized frequency division multiplexing (GFDM) \cite{fettweis09} features several advantages over conventional orthogonal frequency division multiplexing (OFDM) scheme \cite{bingham90}.
%Due to its non-orthogonality, requirements of time and frequency synchronization for GFDM can be relaxed. 
For GFDM, requirements of time and frequency synchronization can be relaxed.
With a pulse shape possessing good spectral localization, low out-of-band (OOB) emission can be achieved by GFDM. 
Besides, GFDM blocks have a fixed length and all required pulse shapes can be obtained by circularly shifting a prototype filter in time and frequency. 
This block structure enables GFDM to adapt to latency constrained conditions \cite{michailow14}. 
These decent properties make GFDM suitable for future applications such as cognitive radios, machine-to-machine communications (M2M) and Internet of Things (IoT) \cite{chen12}.

To increase throughput and improve spectrum efficiency, multiple-input multiple-output (MIMO) techniques are crucial for wireless communication systems. 
However, large-scale MIMO techniques are often faced with the problems of interference, such as intercarrier interference (ICI) and intersymbol interference (ISI), and high complexity of detection, especially for non-orthogonal waveforms.

For the combination of MIMO techniques and GFDM system, several existing works have studied the MIMO-GFDM issue. 
% In \cite{datta15,ozturk16}, MIMO-GFDM is achieved with the technique of spatial modulation, whose data rate is relatively low. 
% In addition,
Low-complexity zero-forcing (ZF) and minimized mean square error (MMSE) equalizers for MIMO-GFDM are proposed in \cite{tunali15} and \cite{zhang17}, respectively. 
However, symbol error rate (SER) with these equalizers is still high. Moreover, several detection schemes for MIMO-GFDM are proposed. 
While \cite{zhang15} adopts a detection scheme with Markov chain Monte Carlo (MCMC) algorithm, iterative receiver schemes with expectation propagation (EP) \cite{zhang16} and MMSE parallel interference cancellation (PIC) \cite{matthe16a,matthe17} are proposed.
Nevertheless, the error-rate-complexity trade-off of these receiver schemes still needs improvement.

To achieve high data rates, spatial multiplexing (SM) is often adopted in MIMO systems, leading to a challenging task of data detection for the receiver.
Furthermore, as low-latency applications, such as Tactile Internet \cite{simsek16}, are proposed, iterative receivers are found to be unsuitable for data detection, since they would repeatedly perform demapping and decoding operation during the detection process, posing a significant latency to the system. 
As a result, non-iterative receivers are considered as more suitable for the detection \cite{simsek16}.

For non-iterative receivers, one of the crucial challenges during the detection process is the effect of interference. 
For OFDM, orthogonality in the frequency domain (FD) allows the characteristic matrix of MIMO channels to be block-diagonalized by Fourier transform (FT), enabling the detection problem to be decoupled into subproblems, where maximum likelihood (ML) detection can be achieved by depth-first sphere decoding (DFSD). 
%For GFDM, FD decoupling is far more difficult than OFDM case because of its inherent non-orthogonality.
For GFDM, FD decoupling has not been well researched, since its potential orthogonality is not fully taken advantage of. 
In \cite{matthe15b,matthe16d}, the detection process for MIMO-GFDM consists of a combination of depth-first SD of groups of symbols with successive interference cancellation (SIC) between groups. 
However, without FD decoupling, this process is far from efficient, since the complexity of MMSE sorted QR decomposition (SQRD) \cite{wubben03} before depth-first SD can be prohibitively high and error propagation accompanied with SIC can be severe when the number of subcarriers is large.

In this paper, we propose a class of prototype filters, which achieve FD decoupling for MIMO-GFDM.
%, i.e., the characteristic matrix of MIMO channels with several permutations can be block-diagonalized by FT. 
By choosing such kind of prototype filters for MIMO-GFDM, the complexity of data detection can be dramatically decreased and the SER performance can be improved. 
With the proposed scheme, GFDM is a promising waveform for next-generation wireless communication.
}
% The contributions of this paper are two-fold:

% 1. \emph{GFDM Orthogonal Prototype filters for FD Decoupling}: 

% We propose a class of orthogonal prototype filters to achieve FD decoupling for MIMO-GFDM, leading to a considerable drop of detection complexity and a significant improvement of SER performance mainly due to the reduction of MMSE-SQRD complexity and avoidance of error propagation caused by SIC.

% 2. \emph{Integration of OFDM into GFDM orthogonal prototype filters}

% In this paper, we show that OFDM is a special case of one of the GFDM prototype filters, which belongs to the proposed family of orthogonal prototype filters. 
% With this relationship, there is a trade-off in the choice of the number of subsymbols transmitted in each of the subcarriers.

Generalized frequency division multiplexing (GFDM), considered as a generalization of the conventional orthogonal frequency division multiplexing (OFDM), was studied recently as a new candidate waveform for future wireless communication systems \cite{michailow14}.
In addition to an appropriate pulse shaping filter, it features several advantages including low out-of-band (OOB) emissions and relaxed frequency synchronization requirements. 
However, most prototype filters used for GFDM make it a non orthogonal system, causing inter subcarrier interference (ICI) and inter subsymbol interference (ISI) problems. Consequently, when multiple-input-multiple-output (MIMO) scenarios are considered, the cancellation of various types of interference, that is, inter antenna inteference (IAI) along with ICI and ISI, becomes a severe complication that makes detection in MIMO-GFDM difficult.

% In the GFDM literature, raised-cosine (RC) filters have become a popular choice of prototype filter due to some good properties including a low OOBE. \cite{matthe14a,...}
% With the choice of RC filters as GFDM prototype filters,

To resolve this problem, many methods of MIMO-GFDM equalization and detection have been studied.
Linear equalizers for MIMO-GFDM, including zero-forcing (ZF)\cite{tunali15} and minimized mean square error (MMSE) \cite{zhang17}, possess a relatively high symbol error rate (SER). 
Existing MIMO-GFDM detection methods can roughly be categorized into non iterative\cite{matthe15b, matthe16d} and iterative \cite{zhang15,zhang16,matthe16a,matthe17} schemes.
To remove ICI, ISI, and IAI simultaneously, non iterative receivers \cite{matthe15b,matthe16d} possess prohibitively high complexity.
Iterative receivers \cite{matthe16a, matthe17}, by contrast, have an affordable complexity in each iteration. However, the required iterations may cause processing latency, rendering them unsuitable for critical-time applications \cite{simsek16}.

%However, when it comes to MIMO-GFDM spatial multiplexing, the induced interference, including inter-carrier interference (ICI), inter-subsymbol interference (ISI) and inter-antenna interference (IAI), are found to dramatically increase sphere-decoder complexity \cite{matthe15b, matthe16d} for non-iterative receivers.

In this study, we investigate a class of prototype filters that are ICI-free and can be applied in MIMO-GFDM with low complexity and low latency.
An example is the Dirichlet filter \cite{matthe14a}, which was reported before but has not been widely used.
We extend the Dirichlet filter to a class of ICI-free prototype filters and study non-iterative receivers for MIMO-GFDM with such filters.

The remainder of this paper is organized as follows. In Section \ref{sec:system}, we introduce the MIMO-GFDM system model and existing detection methods. The proposed scheme is illustrated in Section \ref{sec:method}. Simulation results and discussion are presented in Section \ref{sec:simulation}. Finally, a conclusion is provided in Section \ref{sec:conclusion}.

\emph{Notations:} Boldfaced capital and lowercase letters denote matrices and column vectors, respectively. We use $\e\{\cdot\}$ to denote the expectation operator. 
Given a vector $\vu$, we use $[\vu]_n$ to denote the $n$th component of $\vu$, $\|\vu\|$ the $\ell_2$-norm of $\vu$, and $\diag(\vu)$ the diagonal matrix containing $\vu$ on its diagonal. 
Given a matrix $\mA$, we denote $[\mA]_{m, n}$, $\vect(\mA)$, ${\mA}^T$, and ${\mA}^H$ its ($m$, $n$)th entry (zero-based indexing), column-wise vectorization, transpose, and Hermitian transpose, respectively. 
For any matrices $\mA$ and $\mB$, we use $\mA \otimes \mB$ to denote their Kronecher product.  
We define ${\mI}_p$ to be the $p \times p$ identity matrix, $\textbf{1}_p$ the $p \times 1$ vector of ones, ${\mW}_p$ the normalized $p$-point discrete Fourier transform (DFT) matrix with ${[\mW_p]}_{m, n}=e^{-j2\pi mn/p}/\sqrt{p}, p \in \bN$, and $\gd_{kl}$ the Kronecker delta. 
For any set $\cA$, we use $|\cA|$ to denote its cardinality. 
Given matrices $\mA_l, ~\forall 0 \leq l < p, p \in \bN$, of size $m \times n$, we use \mbox{blkdiag}($\{\mA_l\}_{l=0}^{p-1}$) to denote the $pm \times pn$ block diagonal matrix whose $l$th diagonal block is $\mA_l$.
For any $A\in \bN$, we use ${\bf \Pi}_A$ to denote the $A\times A$ permutation matrix ${\bf \Pi}_A = \mx{\zv^T & 1 \\ \mI_{A-1} & \zv}$.
For any $A,B \in \bN$, we use ${\bf\Pi}_{AB}$ to denote the $AB\times AB$ permutation matrix defined as
% \begin{equation*}
$
{[{\bf\Pi}_{AB}]}_{mB+p,qA+n}=\delta_{m n}\delta_{p q},
\label{eq:pi_AB}
$
% \end{equation*}
$ \forall m,n\in\{0,1,...,A-1\},\forall p,q\in\{0,1,...,B-1\}.$

\section{System Model for MIMO-GFDM and Problem Formulation}
\label{sec:system}
Consider a MIMO-GFDM system with $T$ transmit and $R$ receive antennas operating in spatial multiplexing (SM) mode. 
Let $\vd_t \in\mathbb{C}^D$ be the data vector at the $t$th transmit antenna, $t=1,2,...T$, which satisfies $E[\vd_t] = \zv$ and $E[\vd_t\vd_t^H] = E_s\mI_D$, where $E_s$ is the symbol energy. Given a GFDM prototype filter $\vg$ \cite{michailow14}, we denote the corresponding transmitter matrix as \cite{michailow14}
\begin{equation}
\mA=[\vg_{0,0}...\vg_{K-1,0} \quad \vg_{0,1}... \vg_{K-1,1}...\vg_{K-1,M-1}],
\label{eq:AMatrix}
\end{equation}
where $\vg_{k,m}$ pulse-shapes the $m$th subsymbol on the $k$th subcarrier of $\vd_t$, with its $n$th entry being $[\vg_{k,m}]_n=[\vg]_{\modd{n-mK}_D}e^{j2 \pi kn/K}$, $n=0, 1, ..., D-1$, $m=0,1,...,M-1,k=0,1,...,K-1$.
The frequency-domain (FD) prototype filter is defined as $\vg_f = \sqrt{D}\mW_D\vg$. Subsequently, the data vector $\vd_t$ is modulated by $\mA$. 
%To be received by the $r$th receive antenna, $r=1,...,R$, 
The modulated data vector passes through the process consisting of a cyclic prefix (CP) insertion of length $L$, a linear time-invariant (LTI) channel between the $t$th transmit and $r$th receive antennas, and a CP removal. This process can be denoted as $\mH_{r,t}$, which is a circulant channel matrix between the $t$th transmit and $r$th receive antennas, and $\vy_r$ is the received signal at the $r$th receive antenna.
The signal at the receive antennas can be expressed as \cite{matthe15b}

%, where transmit power is uniformly distributed over the antennas, 
\begin{equation}
\underbrace{\mx{\vy_1 \\ \vdots \\ \vy_R}}_{\vy} = \underbrace{\mx{
\mH_{1,1}\mA  & \cdots & \mH_{1,T}\mA\\
\vdots & \ddots & \vdots\\
\mH_{R,1}\mA  & \cdots & \mH_{R,T}\mA\\}}_{\tilde{\mH}}
\underbrace{\mx{\vd_1  \\ \vdots \\ \vd_T}}_{\vd} + \vn,
\label{eq:y}
\end{equation}
where $\vn \sim \cC \cN (0,N_0 \mI_{RD})$ is additive white Gaussian noise (AWGN) and $\mA$ is defined as in (\ref{eq:AMatrix}).
We denote $\tilde\mH$ as the $RD\times TD$ characteristic matrix of MIMO channels, $\vd$ the transmitted data vector, and $\vy$ the received data vector. 
If we set $\mA=\mW_D^H$ in (\ref{eq:y}), then it reduced to a MIMO-OFDM system.

%\subsection{MIMO Detection}
At the receiver, an optimal detection rule can yield the maximun likelihood (ML) solution to (\ref{eq:y}) in terms of the minimum distance:
\begin{equation}
\hat \vd = \arg \underset{\vd\in \cD}{\mbox{min}} \| \vy-\tilde{\mH} \vd \|^2,
\label{eq:d_hat}
\end{equation}
where $\cD$ is the set consisting of all possible transmit symbol vectors, restricted by the constellation set. 
However, the huge size of $\cD$ proves that an exhaustive search would be infeasible, as suggested in (\ref{eq:d_hat}). 
In \cite{matthe15b,matthe16d}, a near-ML solution of the problem (\ref{eq:d_hat}) was proposed with MMSE sorted QR decomposition (SQRD) \cite{wubben03} of the matrix $\tilde\mH$.

%existing methods \cite{matthe15b,matthe16d} perform 
Detection is performed by combining depth-first sphere decoding (DFSD) \cite{damen00} of groups of symbols and successive interference cancellation (SIC) between groups in series. 
% Before depth-first SD, the MMSE-SQRD of $\tilde \mH$ should be derived and is given by
% \begin{equation}
% \mx{\tilde{\mH} \\ \sqrt{N_0} \mI_{DT}} = \mx{\mQ_1 \\ \mQ_2} \mR \mP^T,
% \label{eq:H_tilde_MMSE}
% \end{equation}
% where $\mQ_1$ and $\mQ_2$ are $DR \times DT$ and $DT \times DT$ matrices respectively, $\mR$ is a $DT \times DT$ upper triangular matrix, and $\mP$ is a $DT \times DT$ permutation matrix, which denotes the column sorting of $\tilde{\mH}$. Then multiply both sides of (\ref{eq:y}) with $\mQ_1^H$ and obtain
% \begin{equation}
% \tilde{\vy} = \mR \tilde{\vd}+\tilde{\vn},
% \label{eq:y_tilde}
% \end{equation}
% where $\tilde{\vy}=\mQ_1^H \vy$, $\tilde{\vd}=\mP^T \vd$, and $\tilde{\vn}=\mQ_1^H \vn$.
% Subsequently, the detection process will be conducted with a serial combinations of depth-first SD of groups of symbols and SIC between groups. 
% This process is shown in \textbf{Algorithm 1}, where the subscripts $S$, $\bar{S}$, and : denote the last $S$, all but last $S$, and all elements of the subscripted object respectively, and \textbf{$\lambda_S$} and $\mM$ denote the log-likelihood ratio (LLR) of bits of last $S$ symbols and the demapping.
However, this process is inefficient because of the prohibitively high complexity of MMSE-SQRD of $\tilde{\mH}$ and severe error propagation accompanying SIC when numerous subcarriers are employed because of the lack of FD decoupling.

\section{Proposed Method}

%(defined as $\vg_f = \sqrt{D}\mW_D\vg$)

\label{sec:method}
\subsection{FD Decoupling}
To address the problem of the high complexity of MMSE-SQRD and severe error propagation caused by SIC in large-scale data detection, FD decoupling for MIMO-GFDM is crucial but has not been fully investigated.
In this paper, we propose a class of prototype filters to achieve FD decoupling for MIMO-GFDM, leading to dramatically decreased SQRD complexity with improved SER performance. 
The following theorem represents the foundation of the proposed scheme.

\begin{theorem}
\label{FD_decoup}
%Given a prototype filter $\vg$ with a $K \times M$ phase-shifted characteristic matrix $\bar{\mG}$ and a characteristic matrix of MIMO channels $\tilde{\mH}$, $\tilde{\mH}$ with several permutations can be block-diagonalized by FT if $\bar{\mG}$ is rank-one, i.e., $\bar{\mG}=\vg_2 \vg_1^T$, where $\vg_2=c \times \textbf{1}_K$, $~\forall c \in \mathbb{R}$, is a $K \times 1$ constant vector and $\vg_1 \in \mathbb{C}^M$.\\

%(Theorem 1 re-written) 
Let $\mA$ be a GFDM matrix derived from its FD prototype filter $\vg_f$ and assume that $\vg_f$ contains at most $M$ consecutive nonzero entries (i.e., there exist $\vg_1 \in \mathbb{C}^M$ and an integer $l$, $0\leq l < D$) such that
\begin{equation}
 \vg_f = {\bf\Pi}^l_D \mx{\vg_1^T & {\bf 0}^T_{(K-1)M}}^T.
\label{eq:gf_fd_decoupling}
\end{equation}
Consequently, the matrix $\tilde{\mH}$ as defined in (\ref{eq:y}) can be decomposed into the form %block-diagonalized with several permuatations, i.e., there exist an $RD\times RD$ unitary matrix $\mU$, an $TD\times TD$ permutation matrix $\mP$, and $MR\times MT$ matrices $\mF_k$, $k=0,...,K-1$, such that 
\begin{equation}
\tilde{\mH} = \mU^H \mathrm{ blkdiag}(\{\mF_k\}_{k=0}^{K-1}) \mP, \label{eq:H_block_diagonalized}
\end{equation}
where $\mU = ({\bf \Pi}_{KR} \otimes \mI_M)(\mI_R \otimes {\bf\Pi}_D^{-l}\mW_D)$, $\mP = ({\bf \Pi}_{KT} \otimes \mI_M)(\mI_T\otimes {\bf \Pi}_{KM}), $ and  $\mF_k$, $k=0,...,K-1$, are some $MR\times MT$ matrices.

\end{theorem}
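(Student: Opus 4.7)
The plan is to exploit two facts in sequence. First, each circulant channel $\mH_{r,t}$ is diagonalized by the DFT matrix $\mW_D$, so $\mW_D\mH_{r,t}\mW_D^H = \gL_{r,t}$ is diagonal. Second, the hypothesis (\ref{eq:gf_fd_decoupling}) on $\vg_f$ forces the matrix $\mW_D\mA$ to possess a tightly controlled column-wise sparsity pattern. Combining the two, $\tilde{\mH}$ becomes, after a unitary row operation, a block matrix whose every column's nonzeros are confined to $M$ consecutive rows whose position depends only on the subcarrier index $k$; the block-diagonal form (\ref{eq:H_block_diagonalized}) then emerges by regrouping rows and columns by subcarrier.

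First I would compute $\mW_D\mA$ entry-wise. Starting from $[\vg_{k,m}]_n = [\vg]_{\modd{n-mK}_D}e^{j2\pi kn/K}$, substituting $n'=\modd{n-mK}_D$, and invoking $D=KM$ to simplify $e^{-j2\pi pmK/D}=e^{-j2\pi pm/M}$ and $e^{j2\pi km}=1$, a direct calculation gives $[\mW_D\mA]_{p,\,mK+k} = \frac{1}{\sqrt{D}}\,e^{-j2\pi pm/M}\,[\vg_f]_{\modd{p-kM}_D}$. Under (\ref{eq:gf_fd_decoupling}), $[\vg_f]_q$ is nonzero only for $q\in\{l,l+1,\dots,l+M-1\}$ modulo $D$; consequently in $\mB \deq {\bf \Pi}_D^{-l}\mW_D\mA$ the nonzero entries of column $mK+k$ lie exclusively in rows $kM,kM+1,\dots,kM+M-1$, and the corresponding $M\times M$ sub-block depends only on the local row offset $\alpha$ and on $m$, not on $k$. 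Call this common sub-block $\mB_0$.

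Next I would fold in the channel. Since $\mH_{r,t}$ is circulant, ${\bf \Pi}_D^{-l}\mW_D\mH_{r,t} = \tilde\gL_{r,t}\,{\bf \Pi}_D^{-l}\mW_D$, where $\tilde\gL_{r,t} \deq {\bf \Pi}_D^{-l}\gL_{r,t}{\bf \Pi}_D^l$ is still diagonal. Thus the $(r,t)$-block of $(\mI_R\otimes{\bf \Pi}_D^{-l}\mW_D)\tilde{\mH}$ equals $\tilde\gL_{r,t}\mB$, inheriting the subcarrier-dependent column support above. Left-multiplying by ${\bf \Pi}_{KR}\otimes\mI_M$ reorders the row triples $(r,k,\alpha)$ into $(k,r,\alpha)$, placing all $MR$ nonzero rows associated with subcarrier $k$ into one contiguous block; analogously, right-multiplying by $\mP^{-1}$ first uses $(\mI_T\otimes{\bf \Pi}_{KM})^{-1}$ to convert each transmit antenna's columns from subsymbol-major $(m,k)$ to subcarrier-major $(k,m)$, and then uses $({\bf \Pi}_{KT}\otimes\mI_M)^{-1}$ to regroup the $MT$ columns belonging to each subcarrier. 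The result is block-diagonal, with the $k$-th block $\mF_k\in\mathbb{C}^{MR\times MT}$ stacking $\tilde\gL_{r,t}^{(k)}\mB_0$ over $(r,t)$, where $\tilde\gL_{r,t}^{(k)}$ is the $M\times M$ diagonal sub-block of $\tilde\gL_{r,t}$ at subcarrier $k$.

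I expect the main obstacle to be purely combinatorial bookkeeping rather than any deep argument: one must track how the Kronecker-product permutations act on the row/column index triples and verify that the supports of the nonzero sub-blocks align in precisely the way required for all off-diagonal blocks of the permuted matrix to vanish. Once the key closed form for $[\mW_D\mA]_{p,\,mK+k}$ is established, unitarity of $\mU$ (a product of DFT and permutation factors) and the fact that $\mP$ is itself a permutation make the identity $\tilde{\mH} = \mU^H \mathrm{blkdiag}(\{\mF_k\})\mP$ follow by straightforward matrix algebra.
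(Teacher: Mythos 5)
Your proposal is correct and follows essentially the same route as the paper's proof: establish the subcarrier-localized support of ${\bf\Pi}_D^{-l}\mW_D\mA$ from the hypothesis on $\vg_f$ (the paper states this as a closed-form Kronecker identity, you derive the equivalent entry-wise formula $[\mW_D\mA]_{p,\,mK+k}=\tfrac{1}{\sqrt{D}}e^{-j2\pi pm/M}[\vg_f]_{\modd{p-kM}_D}$), fold in the DFT-diagonalization of the circulant channels, and regroup rows and columns by subcarrier via the stated permutations. Your $\mB_0$ and $\tilde{\gL}_{r,t}^{(k)}$ correspond exactly to the paper's $\tfrac{1}{\sqrt{K}}\mathrm{diag}(\vg_1){\bf\Pi}_M^{-l}\mW_M$ factor and the diagonal channel slice in $\mE_k^{(r,t)}$.
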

% \emph{Theorem 1:} Given a prototype filter $\vg$ with phase-shifted characteristic matrix $\bar{\mG}$ and a characteristic matrix of MIMO channels $\tilde{\mH}$, $\tilde{\mH}$ with several permutations can be block-diagonalized by FT if $\bar{\mG}$ is rank-one, i.e., $\bar{\mG}=\vg_2 \vg_1^T$, where $\vg_2= \textbf{1}^K$ and $\vg_1 \in \mathbb{C}^M$.

\begin{proof}
With some effort, we can prove that
\begin{eqnarray*}
{\bf\Pi}_D^{-l}\mW_D \mA &=& \frac{1}{\sqrt{K}} {\bf \Pi}_{KM}(\mathrm{diag}(\vg_1){\bf\Pi}^{-1}_M \mW_M \otimes \mI_K)%\\ &=& (\mI_K\otimes \mathrm{diag}(\vg_1)\mW_M^H) {\bf \Pi}_{MK}.
\end{eqnarray*}
for any $l$. 
Consequently, noting that a circulant matrix $\mH_{r,t}$ is diagonalizable using $\mW_D$, we can prove that
{\fontsize{9}{6}
\begin{eqnarray}
 {\bf\Pi}_D^{-l}\mW_D\mH_{r,t}\mA &=& \frac{1}{\sqrt{K}}\mathrm{diag}({\bf\Pi}^{-l}_D \vh_f^{(r,t)}) \nonumber \\&& \cdot(\mI_K \otimes \mathrm{diag}(\vg_1) {\bf\Pi}^{-l}_M\mW_M) {\bf \Pi}_{KM} \nonumber  \\
 &=& \mathrm{blkdiag}(\{\mE_k^{(r,t)}\}^{K-1}_{k=0}) {\bf \Pi}_{KM} \label{eq:lemma}
\end{eqnarray}}
where 
{\fontsize{8}{6}
\begin{eqnarray}
 \mE_k^{(r,t)} =& \frac{1}{\sqrt{K}}\mathrm{diag}([\mO_{M\times kM},\mI_M,\mO_{M\times (K-1-k)M}]{\bf\Pi}^{-l}_D \vh_f^{(r,t)})\nonumber\\
& \cdot \mathrm{diag}(\vg_1) {\bf\Pi}^{-l}_M\mW_M\label{eq:Ek}
\end{eqnarray}
}
and $\vh_f^{(r,t)}$ is the DFT of the first column of $\mH_{r,t}$ %contains the frequency response of the channel from the $t$th transmit to the $r$th receive antenna. 
Consequently, using (\ref{eq:lemma}), we can easily verify that
{\fontsize{8}{6}
\begin{eqnarray*}
&& (\mI_R \otimes {\bf\Pi}^{-l}_D\mW_D) \tilde{\mH} \cdot (\mI_T\otimes {\bf \Pi}_{KM})\\
%&=&\mx{{\bf\Pi}^{-l}_D\mW_D\mH_{1,1}\mA & \cdots & %{\bf\Pi}^{-l}_D\mW_D\mH_{1,T}\mA\\
%\vdots & & \vdots \\
%{\bf\Pi}^{-l}_D\mW_D\mH_{R,1}\mA & \cdots & %{\bf\Pi}^{-l}_D\mW_D\mH_{R,T}\mA}\\
&=& \mx{\mathrm{blkdiag}(\{\mE_k^{(1,1)}\}_{k=0}^{K-1}) & \cdots & \mathrm{blkdiag}(\{\mE_k^{(1,T)}\}_{k=0}^{K-1})\\
\vdots &  & \vdots \\
\mathrm{blkdiag}(\{\mE_k^{(R,1)}\}_{k=0}^{K-1})&\cdots&\mathrm{blkdiag}(\{\mE_k^{(R,T)}\}_{k=0}^{K-1})}. \\
%&& \cdot (\mI_T\otimes {\bf \Pi}_{KM}).
\end{eqnarray*}}
Therefore, we can prove that
\begin{eqnarray*}
\mU\tilde{\mH}&=& ({\bf \Pi}_{KR}\otimes \mI_M)(\mI_R\otimes {\bf\Pi}^{-l}_D\mW_D)\tilde{\mH} \\
&=& \mathrm{blkdiag}(\{\mF_k\}_{k=0}^{K-1}) \underbrace{({\bf\Pi}_{KT}\otimes\mI_M)(\mI_T\otimes {\bf\Pi}_{KM})}_{\mP}\\
\end{eqnarray*}
where
{\fontsize{8}{6}
$
\mF_k = \mx{\mE_k^{(1,1)} &
\cdots & \mE_k^{(1,T)}\\
\vdots & & \vdots \\
\mE_k^{(R,1)} & \cdots & \mE_k^{(R,T)}} .   
$}
The proof of Theorem 1 is complete.

\end{proof}

Theorem \ref{FD_decoup} implies that a prototype filter whose frequency domain contains only $M$ consecutive nonzero values (i.e., satisfying (\ref{eq:gf_fd_decoupling})) would enable the MIMO-GFDM system to possess FD decoupling capability, thereby leading to cheap receiver implementation of MIMO-GFDM detection, to be elaborated later.

The Dirichlet pulse \cite{matthe14a} is a typical example of this class of prototype filters, in which $\vg_1 = {\bf 1}_M$ and $l=D-\lceil-M/2\rceil$ in (\ref{eq:gf_fd_decoupling}). 
%consider the filter with $\vg_f = \mx{{\bf 1}_K^T & \zv^T_{K-1}}^T$, which is a version of the Dirichlet pulse \cite{matthe14a}, and we refer to it as \emph Dirichlet filter} hereafter. 
%Note that its phase-shifted characteristic matrix $\bar{\mG}$ \cite{chen17b} is a $K \times M$ matrix with all entries equal to 1, i.e.,
% \begin{equation}
% [\bar{\mG}]_{k,l}=1, ~\forall 0 \leq k < K, 0 \leq l < M
% \label{eq:one}
% \end{equation}
% It is worthy to note that the original form of Dirichlet filter \cite{matthe14a} does not satisfy Theorem 1 and is not applicable to the scheme developed later.
The widely used raised-cosine (RC) filter \cite{michailow14} is not a member of this class.
It is straightforward to verify that when $M=1$, the statement in Theorem 1 reduces to the special case of MIMO-OFDM, in which a rectangular window is used as the prototype filter and each subcarrier trasmits only one subsymbol in a block.

% Now consider OFDM, which is a special case of GFDM with a rectangular window as the prototype filter and where each subcarrier transmits only one subsymbol in a block. Particularly, FD decoupling of OFDM is pertinent to the structure of its prototype filter. In the following corollary, we will show that OFDM, with a $D \times 1$ phase-shifted characteristic matrix, is also a special case of Dirichlet, and therefore belongs to the class of orthogonal prototype filters specified in Theorem \ref{FD_decoup}.

% \begin{corol}
% \label{OFDM}
% OFDM belongs to the family of Dirichlet, i.e., the phase-shifted characteristic matrix of OFDM is a $D \times 1$ matrix with all entries equal to 1.
% \end{corol}

% \begin{proof}
% The prototype filter of OFDM is $\vg=\textbf{1}_D/ \sqrt{D}$, and its characteristic matrix \cite{chen17b} can be obtained as $\mG=\textbf{1}_D$.

% Finally, it can be derived \cite{chen17b} that the phase-shifted characteristic matrix $\bar{\mG}=\textbf{1}_D$, which equals to $\mG$.
% \end{proof}

% According to Corollary \ref{OFDM}, OFDM is actually a special case in Theorem \ref{FD_decoup}, which can be regarded as a generalization of orthogonal prototype filter design for FD decoupling.

\subsection{Proposed MIMO Detection Scheme}

% \begin{table*}[!t]
%     \centering
%     \caption{Computational Complexity of MMSE-SQRD and SIC}
%     \label{tab:CC}
%     \begin{tabular}{c|c|c}
%     \hline
%     Scheme     & MMSE-SQRD  & SIC\\
%     \hline\hline
%      OFDM    &  $KM T^2 R+KMTR+(2KMT^3+3KMT^2+KMT)/6$   & 0\\
%      \hline
%      Conventional GFDM \cite{?}    &  $K^3 M^3 T^2 R+K^2M^2TR+(2K^3M^3T^3+3K^2M^2T^2+KMT)/6$   & $K^2 T^2 M^2$\\
%      \hline
%      Proposed    &  $K M^3 T^2 R+KM^2TR+(2KM^3T^3+3KM^2T^2+KMT)/6$   & 0 \\
%      \hline
%     \end{tabular}
% \end{table*}
We now address the MIMO-GFDM detection process.
Given the received data vector $\vy$, we first perform the operation
$ \bar\vy = \mU \vy$, which involves only $R$ parallel $D$-point fast Fourier transform and several permutations.
%, focusing on solving the problem of the following equation:
We consequently obtain
\begin{equation}
\label{eq:y_bar}
\bar\vy = \mathrm{blkdiag}(\{\mF_k\}_{k=0}^{K-1}) \bar\vd + \bar\vn
\end{equation}
% \begin{equation}
% \underbrace{\mx{\bar{\vy}_0 \\ \vdots \\ \bar{\vy}_{K-1}}}_{\bar{\vy}} = \mbox{blkdiag}(\{ \mF_k\}_{k=0}^{K-1})\underbrace{\mx{\bar{\vd}_0 \\ \vdots \\ \bar{\vd}_{K-1}}}_{\bar{\vd}} + \underbrace{\mx{\bar{\vn}_0 \\ \vdots \\ \bar{\vn}_{K-1}}}_{\bar{\vn}}
% \label{eq:y_bar1}
% \end{equation}
where $\bar{\vd} = \mP\vd$ and $\bar{\vn} = \mU\vn$. %, and $ ~\forall 0 \leq k < K$, $\bar{\vy}_k$, $\mF_k$, $\bar{\vd}$.
Eq. (\ref{eq:y_bar}) is in the form of block diagonalization.
Divide the vector $\bar\vy$ into $K$ segments and denote $\bar\vy_k$ as the $k$th vector of length $RM$, $k=0,1,...,K-1$, and consequently we obtain 

% , and $\bar{\vn}$ are an $MR \times 1$ vector, an $MR \times MT$ matrix, an $MT \times 1$ vector, and an $MT \times 1$ vector corresponded to the $k$th subcarrier, respectively.

%Due to the property of block-diagonalization, (\ref{eq:y_bar}) can be decoupled into $K$ subproblems of dimension $MT$, with each expressed as
\begin{equation}
\bar{\vy}_k=\mF_k\bar{\vd}_k+\bar{\vn}_k, k=0,1,...,K-1,
\label{eq:yi_bar}
\end{equation}
where $\bar\vd_k$ and $\bar\vn_k$ are the $k$th parts of $\bar\vd$ and $\bar\vn$, respectively. 
The vector $\bar\vy_k$ represents the received data from the $k$th subcarrier, which depends only on the transmitted data of the $k$th subcarrier $\bar\vd_k$ and does not suffer from ICI.

To solve the subproblems of (\ref{eq:yi_bar}), we employ the SQRD \cite{wubben01} of $\mF_k$ given by %\begin{equation}
$\mF_k = \mQ_{k}  \mR_k \mP_k^T,$ 
%\label{eq:F_kk}
%\end{equation}
where $\mQ_{k}$ is an $MR \times MT$ unitary matrix, $\mR_k$ is an $MT \times MT$ upper triangular matrix, and $\mP_k$ is an $MT \times MT$ permutation matrix, which denotes the column sorting of $\mF_k$.
Consequently, by multiplying both sides of (\ref{eq:yi_bar}) with $\mQ_{k}^H$, we obtain
% \begin{equation}
% \tilde{\bar{\vy}}_k = \mR_k \tilde{\bar{\vd}}_k+\tilde{\bar{\vn}}_k,
% \label{eq:ybar_tilde}
% \end{equation}
$\tilde{\bar{\vy}}_k = \mR_k \tilde{\bar{\vd}}_k+\tilde{\bar{\vn}}_k,
\label{eq:ybar_tilde}
$
where $\tilde{\bar{\vy}}_k=\mQ_k^H \bar{\vy}_k$, $\tilde{\bar{\vd}}_k=\mP_k^T \bar{\vd}_k$, and $\tilde{\bar{\vn}}=\mQ_k^H \bar{\vn}_k$.
Subsequently, the ML solution to each of these $K$ subproblems is computed in parallel with DFSD, without the error propagation caused by SIC. 
Consequently, the detection complexity can be dramatically reduced and the SER performance can be expected to improve. 
% The detection process is shown in \textbf{Algorithm 1}.

\ignore{

\begin{algorithm}[h]
\SetAlgoLined
\textbf{Input:} $\vy, \mF_k,k=0,1,...,K-1$

$\vy_r=\mW_D \vy_r, r=1,...,R$

Shift $D-l$ elements of $\vy_r$ downward, $r=1,...,R$

Exchange rows $(kR+r) \times M:(kR+r) \times M+M-1$ and $(rK+k) \times M:(rK+k) \times M+M-1$ in $\vy, k=0,1,...,K-1,r=0,1,...,R-1$, and $\vy$ becomes $\bar{\vy}$ defined in (\ref{eq:y_bar})

Perform MMSE-SQRD defined in (\ref{eq:F_kk}) to $\mF_k, k=0,1,...K-1$

$\tilde{\bar{\vy}}_k=\mQ_{1,k}^H\bar{\vy_k}$

$\tilde{\bar{\vd}}_k=SD(\tilde{\bar{\vy}}_k,\mR_k)$

% % Rotate with $\mR_l$ (not sure?)

% % Exchange the rows of 
% % 1) Perform block diagonalization of $\tilde{H}$, as defined in the eq. in Thm 1

% % 2) Obtain $\bar{y} = \mU y$ where Q is defined in Thm 1.

% % 3) For each of the segment of $\bar{y}$, i.e., $\bar{y}_k$, perform MMSE-SQRD of $F_k$ as in (16) and (17)

% % 4) Finally, use depth-first SD to solve $\bar{d}_k$ as in (18)
% % \textbf{for} $d=1$ to $D$

% % \qquad $\bar{\lambda}_d = SD(\tilde{\bar{\vy}}_d,\mR_d)$

% % \qquad $\hat{\bar{\vd}}_d = \bar{\mM}(\bar{\lambda}_d>0)$

% % \textbf{end for}
% % \textbf{Block diagonalization:}

% % Compute $\bar{\mH}$ through $\tilde{\mH}$

% % \textbf{Parallel MMSE QR:}

% % $N_0 = \{n_1,n_2,...,n_S\}$ //Assign a constellation set of $S$ noise levels.

% % $\tilde{\mH}=\diag [ \mF_{11},\mF_{22},...,\mF_{KK} ]$

% % \textbf{for} $s=1$ to $S$ \textbf{do}

% % \qquad \textbf{for} $k=1$ to $k$ \textbf{do}

% % \qquad $[\mQ_{k,j},\mR_{k,j},\mP_{k,j}]=MMSE QR(\mF_{kk},n_j)$

% % \qquad \textbf{end for}

% % \textbf{end for}

% % \textbf{Parallel SIC:}

% % \textbf{for} $s=1$ to $S$ \textbf{do}

% % \qquad \textbf{for} $k=1$ to $k$ \textbf{do}

% % \qquad $\lambda_{k,j}=SD(\bar{\vy}_k,\mR_{k,j})$

% % \qquad \textbf{end for}

% % \textbf{end for}

\caption{The Proposed Scheme}
\end{algorithm}

}

\subsection{Complexity Analysis}

\begin{table}[h]
\label{tab:CC}
\caption{Computational Complexity of SQRD and SIC}
\centering %\def\arraystretch{1.5} \small
\begin{tabular}{p{1.5cm}|p{4.5cm}|p{1cm}}
 \hline
 Scheme     & SQRD  & SIC\\
\hline\hline
     OFDM    &  $D T^2 R+DTR+(D T^2-DT)/2$ (Using SQRD\cite{wubben01})   & 0\\
     \hline
Near-ML MIMO-GFDM \cite{matthe15b}    &  $K^3 M^3 T^2 R+K^2M^2TR+(2K^3M^3T^3+3K^2M^2T^2+KMT)/6$ (Using MMSE-SQRD\cite{wubben03})  & $K^2 T^2 M^2$\\
     \hline
     Proposed    &  $K M^3 T^2 R+KM^2TR+(KM^2T^2-KMT)/2$ (Using SQRD\cite{wubben01})  & 0 \\
     \hline\hline
\end{tabular}
\end{table}

\label{subsec:complexity}
To evaluate and compare the computational complexity of the proposed detection scheme with that of OFDM and conventional GFDM implementations \cite{matthe15b}, we consider the number of complex multiplications (CMs) required to detect $KMT$ symbols at the receiver, assuming that the prototype filters of all implementations take complex values. 
For a fair comparison, the block size of $KM$ is used for both GFDM and OFDM \cite{lin10}.

The data detection process consists of SQRD and depth-first SD, with additional SIC involved only in the detection process of conventional GFDM implementations.
The method in \cite{matthe15b} requires MMSE-SQRD to improve the performance of SIC; the proposed method uses only regular SQRD \cite{wubben01} to obtain the ML solution for each subproblem in (\ref{eq:yi_bar}).
Table \ref{tab:CC} shows the computational complexity of SQRD and SIC. 
As for DFSD, GFDM and OFDM receivers require $K$ times of DFSD with size $MT$ and $KM$ times of DFSD with size $T$, respectively, to detect $KMT$ symbols.
Because no analytic solution to the computational complexity of DFSD exists, we evaluate the average complexity of the entire detection process for $KMT$ symbols through Monte Carlo simulation, as described in the next section.

\section{Simulation}
\label{sec:simulation}

% \begin{table}[h]
%     \centering
%     \caption{Parameter Settings}
%     \label{tab:PS}
%     \begin{tabular}{c|c}
%     \hline
%     Parameter     & Value\\
%     \hline\hline
%      $T$    &  2   \\
%      \hline
%      $R$    &  2\\
%      \hline
%      $N_h$    &  500\\
%      \hline
%      $N_d$    &  100\\
%      \hline
%      Channel PDP & Exponential, 0 to -10 dB, $L$ taps\\
%      \hline
%     \end{tabular}
% \end{table}

% MMSE-QR:

% old:$K^3 M^3 T^3+K^3 M^3 T^2 R+K^2 M^2 T^2/2-KMT/2+(K^2M^2 RT+K^2 M^2 T^2)$

% new:$K M^3 T^3+K M^3 T^2 R+K M^2 T^2/2-KMT/2+(K M^2 RT+K M^2 T^2)$

% SIC:

% old:$K^2 T^2 M^2$

% new:0
% \begin{table}[h]
%     \centering
%     \caption{Parameter Settings}
%     \label{tab:PS}
%     \begin{tabular}{c|c|c}
%     \hline
%     Parameter     & Case 1  & Case 2\\
%     \hline\hline
%      $L$  & 256  & 128 \\
%      \hline
%      $K$    &  256   & 128\\
%      \hline
%      $M$    &  4   & 3\\
%      \hline
%     \end{tabular}
% \end{table}
% In this section, we present numerical results to demonstrate the effectiveness of the proposed receiver.
% We use large numbers of subcarriers and sub-symbols comparable to OFDM systems used in LTE.

In this section, numerical results are provided to compare the performances, in terms of SER and complexity, of the proposed scheme with those of OFDM and conventional GFDM implementations in SM-mode MIMO systems.
%with independent Rayleigh multipath fading channels for two cases. 
%The performances of interest include detection complexity and SER. The detection complexity is evaluated in terms of the average number of CMs required to detect $KMT$ symbols, based on the same principle mentioned in Section \ref{subsec:complexity}.
%The performances of interest include SER and detection complexity. 
%The filters adopted in this section is the same as those adopted in Section \ref{subsec:complexity}.
We adopt the Dirichlet filter for the proposed scheme. 
Both the Dirichlet and RC filters are included for the conventional GFDM implementations \cite{matthe15b}.

%\subsection{Parameter Settings}
% In the first plot, we consider $R=T=2$ and choose the number of subcarriers as $K=128$ the number of sub-symbols being $M=3$. 
% The CP length is chosen as $L=60(?)$.
% The channel impulse response from the $t$th transmit antenna to the $r$th receiver antenna, $c_{rt}[n]$, is an FIR filter of order $L_{h} = 50(?)$, whose entries are i.i.d. complex Gaussian random variables with co-variance $\cC\cN(0, \sigma_n^2)$. 
% Each entry of the data vector $\vd[l]$ is i.i.d. and drawn from the QPSK constellation.
% In our simulation results, we used $N_h = 500(?)$ independent channel realizations, and for each channel realization, $N_d = 500(?)$ independent data blocks $\vd[l]$ are generated for the perforamnce evaluation.

% We compare the symbol error rate (SER) performances for the following cases. 1) MIMO-GFDM detection using SIC with Dirichlet pulse; 2) MIMO-GFDM detection using SIC with RC pulse; 3) Proposed method for MIMO-GFDM detection with Dirichlet pulse (without SIC); 4) MIMO-OFDM systems with 384 subcarriers.

The modulation is QPSK, the symbol energy is $E_S=1$, the CP length $L=D/8$, and the roll-off factor of the RC filter is $\alpha=0.9$.
We consider two cases $(K,M)=(256,4)$ and $(512,2)$ for GFDM. 
For a fair comparison, OFDM uses the same block size, namely,  $(K,M)=(1024,1)$.
%is set equal to the number of subcarriers used in GFDM (i.e., $L=K$). 
The performances are evaluated through Monte Carlo simulation with randomly generated channel realizations and independent data sets for the realizations.
The channel power delay profile is exponential from 0 to -10 dB with $L$ taps.
Each simulation plot is generated with $N_h=500$ spatially uncorrelated Rayleigh fading channel realizations.
$N_d=100$ independent data blocks are generated for performance evaluation for each channel realization.
The numbers of transmit and receive antennas are set as $T=2$ and $R=2$, respectively.
%In our simulation results, we employ $N_h=500$ spatially uncorrelated Rayleigh multipath fading channel realizations, and for each channel realization, $N_d=100$ independent data blocks are generated for performance evaluation.
%For the number of transmit and receive antennas, $T$ and $R$, the number of channel realizations $N_h$, the number of independent data sets, with each containing $KMT$ symbols, for each of the realizations, $N_d$, and the channel PDP, please refer to Table \ref{tab:PS}.

%\subsection{Simulation Results}
%(128,3) first
Simulation results for $(K,M)=(256,4)$ are shown in Fig. \ref{fig:K256M4T2_all}, in which we compare the proposed scheme with OFDM and the conventional GFDM implementations of the Dirichlet and RC filters. % is shown in Fig. \ref{fig:K256M4T2}.
Figs. \ref{fig:K256M4T2} and \ref{fig:K256M4T2_Complexity} present the SER performances and complexity comparisons, respectively.
In Fig. \ref{fig:K256M4T2_Complexity}, the complexity is calculated by adding the required number of CMs at the receiver, including SQRD, SIC, and SD operations. 
SQRD and SIC is calculated using the formula given in Table \ref{tab:CC}, whereas SD is counted each time during the simulation.

Fig. \ref{fig:K256M4T2_Complexity} indicates that the proposed scheme results in significant complexity reduction (approximately $10^5$ times) compared with the conventional GFDM implementations and has a complexity of only approximately 10 times that of the OFDM.
The complexity of OFDM and conventional GFDM implementations \cite{matthe15b} is virtually constant with different SNRs because SQRD dominates; by contrast, the proposed scheme has an increase in the low-SNR region because of DFSD.
The advantage in the complexity of the proposed method compared with conventional implementations, is mainly attributable to $K^2$-time reduction in the dominating term of the SQRD operation.
When we interpret the SER performance presented in Fig. \ref{fig:K256M4T2}, we observe that the proposed scheme has the most favorable SER performance of all curves.
The advantage in SER performance may be attributable to the FD decoupling property of the proposed scheme, which avoids the SER degradation caused by SIC required by conventional implementations. 
OFDM can be considered a special case of the proposed scheme in which $M=1$, with a narrower subcarrier spacing.
It exhibits a less favorable SER performance than GFDM because of low frequency diversity.

\ignore{

The average detection complexity of all schemes is shown in Fig. \ref{fig:K256M4T2_Complexity}.
According to Fig. \ref{fig:K256M4T2_Complexity}, the average detection complexity is the highest for conventional GFDM implementations and the lowest for OFDM.
% Since the complexity of MMSE-SQRD and SIC is fixed, the variation in the detection complexity with different SER is caused by the change in the complexity of depth-first SD over different SER. 
The complexity of OFDM and conventional GFDM implementations is virtually constant with different SER because MMSE-SQRD dominates, while the complexity of the proposed scheme decreases as the SER rises at low SNR because depth-first SD dominates and becomes nearly constant at high SNR because MMSE-SQRD dominates.
For GFDM, the average detection complexity significantly decreases, by around $10^5$ times at high SNR, with the proposed scheme, compared to conventional implementations. 
This is mainly because the dominating term of the complexity of MMSE-SQRD in the proposed scheme is $K^2$ less compared to that in conventional implementations. 
The complexity drop can be considerable when the number of subcarriers is large, especially in the case of large-scale data detection. 
Note that the complexity of conventional implementations also suffers from SIC, which does not exist in the proposed scheme.
Compared with OFDM, the average detection complexity in the proposed scheme is only greater by about 10 times, mainly coinciding with the $M^2$-fold increase in the dominating term of the complexity of MMSE QR. 
% Due to the small number of subsymbols transmitted in each subcarrier, the complexity increase is limited. 
However, the increase in $M$ implies the enlargement of subcarrier spacing, therefore leading to an SER improvement.

}
%it can be observed that there is a trade-off between the detection complexity and SER performance. While the complexity of OFDM is less than that of the proposed scheme, the SER of OFDM is higher than that of the proposed scheme. To evaluate and compare the trade-off of both the proposed scheme and OFDM, we define a metric $\sigma$, which is a ratio of the product of the complexity and SER for the proposed scheme to that for OFDM, and $\sigma<1$ means that the trade-off of the proposed scheme is better, and $\sigma>1$ means the opposite. Consider two cases, SER$=$10 dB and 20dB, and the values of $\sigma$ are around 6 and 0.4, respectively. As a result, OFDM performs better at low SNR, while the proposed scheme performs better at high SNR.

For $(K,M)=(512,2)$, the SER and detection complexity results for all schemes are shown in Figs. \ref{fig:K512M2T2} and  \ref{fig:K512M2T2_Complexity}, respectively. 
The trends are similar to those in the previous case.
In addition, we observe that the SER performance of the RC filter severely degrades when $M=2$. 
%This degradation may limit the application of RC filter as the prototype filter of MIMO-GFDM system.

Figs. \ref{fig:K256M4T2_all} and \ref{fig:K512M2T2_all} reveal that the proposed scheme with the Dirichlet filter requires significantly less detection complexity for GFDM than conventional implementations and achieves superior SER performance. 
In addition, the choice of $M$ affects both the SER performance and detection complexity. 
Recall that OFDM is a special case of the Dirichlet filter in which $M=1$. 
Although the detection complexity drops with OFDM (i.e., the value of $M$ is minimized), the SER performance improves with the Dirichlet filter as $M$ increases, which may be attributable to the additional frequency diversity obtained from using a larger subcarrier spacing.

%Note that in this case, $M$ is an odd number, and thus the transmitter matrix of RC filter is non-singular, enabling the SER performance to be improved, compared to the previous case. However, the SER of RC filter is still higher than that of the proposed scheme.

%Overall, the proposed scheme achieves the lowest SER with all SNR and the best error-complexity-tradeoff at high SNR among all schemes. For GFDM, the detection complexity significantly decreases with the proposed scheme, compared to conventional implementations, especially when $K$ is large.

% The result is shown in Figure \ref{fig:K128M3T2}. 
% We observe that SER performances for all three GFDM systems instanaces outperform that of OFDM.
% This is because GFDM, having a larger bandwidth per subcarrier than OFDM, enjoys a better frequency diversity.
% Among all GFDM instances, RC with SIC performs better than Dirichlet SIC, but the proposed ethod using Dirichlet without SIC still has the best performance among all instances. 
% This is because.....

% The proposed method, while having the best SER performance, is also superior in terms of complexity. 
% In Figure \ref{fig:K128M3T2_Complexity}

% \begin{figure}[h]
% \includegraphics[width=8.5cm]{K128M3T2.eps}
% \caption{Comparisons of symbol error rate performances for the case $K=128,M=3,T=2$.}
% \label{fig:K128M3T2}
% \end{figure}

% \begin{figure}[h]
% \includegraphics[width=8.5cm]{complexity.eps}
% \caption{K128M3T2 Complexity}
% \label{fig:K128M3T2_Complexity}
% \end{figure}

\begin{figure}%
\centering
\subfigure[SER]{%
\label{fig:K256M4T2}%
\includegraphics[height=1.2in]{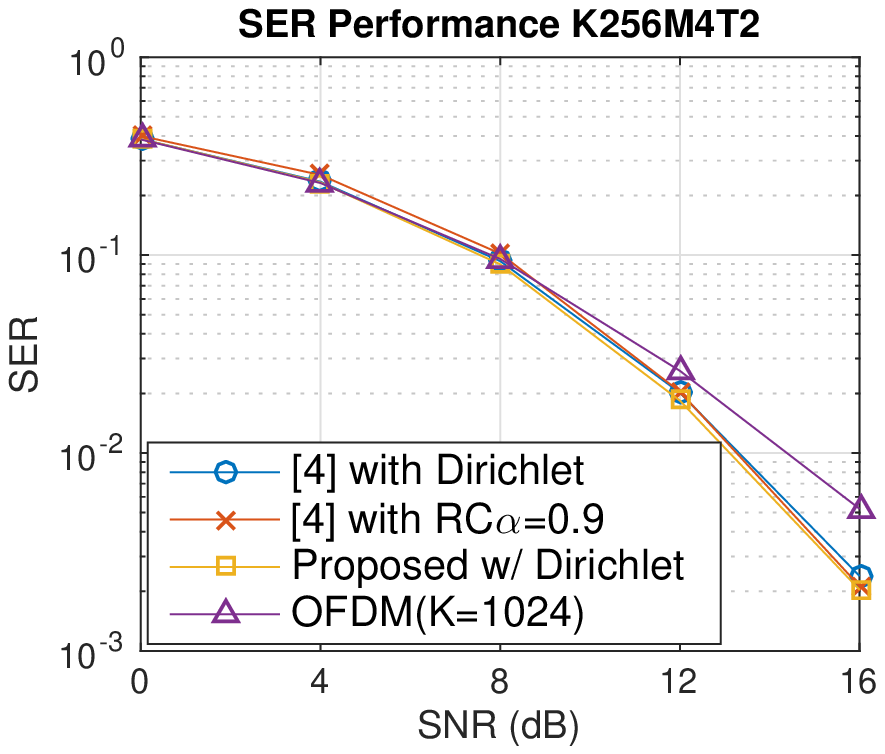}}%
~
\subfigure[Complexity]{%
\label{fig:K256M4T2_Complexity}%
\includegraphics[height=1.2in]{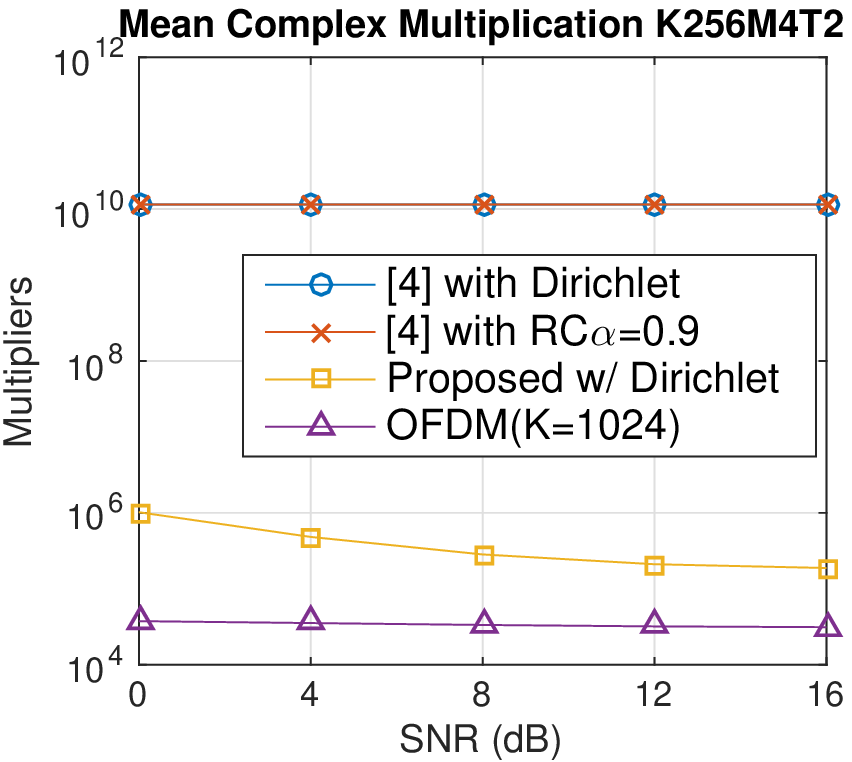}}%
\caption{Performance comparison for $K=256,M=4,T=2$}
\label{fig:K256M4T2_all}
\end{figure}

\begin{figure}%
\centering
\subfigure[SER]{%
\label{fig:K512M2T2}%
\includegraphics[height=1.2in]{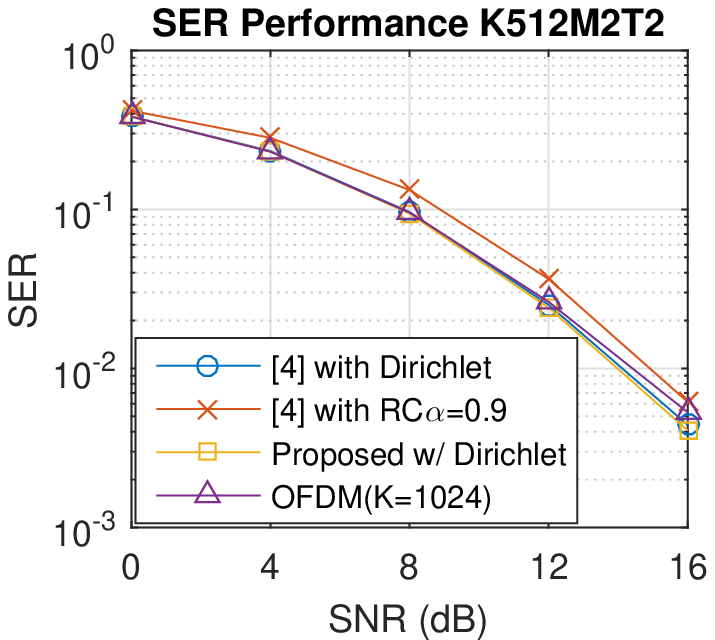}}%
\quad
\subfigure[Complexity]{%
\label{fig:K512M2T2_Complexity}%
\includegraphics[height=1.2in]{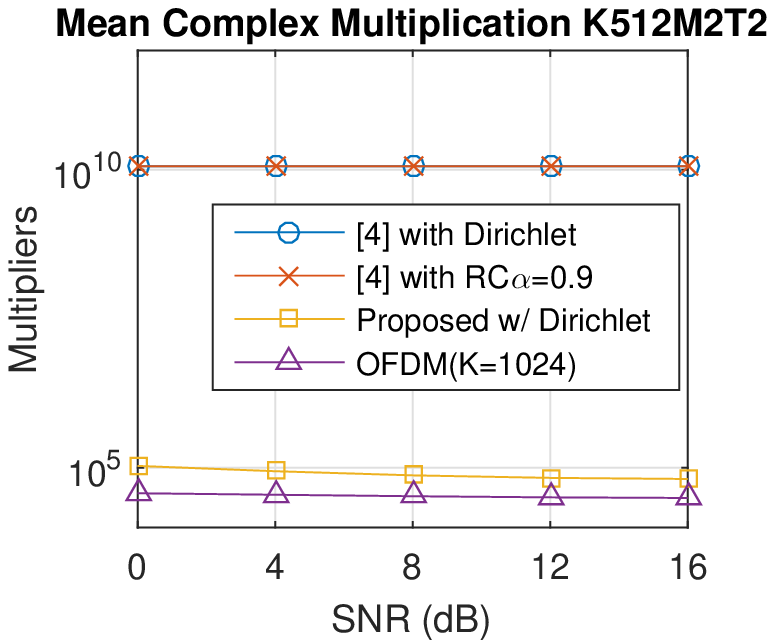}}%
\caption{Performance comparison for $K=512,M=2,T=2$}
\label{fig:K512M2T2_all}
\end{figure}

\section{Conclusions}
\label{sec:conclusion}
In this paper, we propose a non iterative depth-first sphere-decoding (DFSD) detection scheme of MIMO-GFDM spatial multiplexing (SM) by exploiting frequency-domain (FD) decoupling.
We identify a class of prototype filters that enable such FD decoupling and facilitate detection of MIMO-GFDM SM with a significant complexity reduction that was hitherto unrealized.
%, enabling the detection complexity with SD to dramatically decrease and the SER performance to significantly improve.
%with the prototype filter whose phase-shifted characteristic matrix is a rank-one matrix that can be decoupled into the product of a constant column vector and an arbitrary row vector. 
Our simulation results confirm the considerable complexity drop and a symbol-error-rate (SER) improvement with the proposed scheme.
These results demonstrate that the Dirichlet filter with the proposed scheme more effectively balances SER performance and detection complexity compared with the widely applied RC filter for MIMO-GFDM systems.
In addition, OFDM is demonstrated to be a special case of the proposed scheme, which can be considered a generalization of prototype filter design for FD decoupling.
%Finally, simulation results prove the considerable SER reduction with all SNR and the error-complexity-tradeoff improvement at high SNR with the proposed scheme, compared to OFDM and conventional GFDM implementations, especially when the number of subcarriers is large.
Future directions include prototype filter design for particular properties and the optimal choice of $M$ in MIMO-GFDM systems.

%\blindtext

% if have a single appendix:
%\appendix[Proof of the Zonklar Equations]
% or
%\appendix  % for no appendix heading
% do not use \section anymore after \appendix, only \section*
% is possibly needed

% use appendices with more than one appendix
% then use \section to start each appendix
% you must declare a \section before using any
% \subsection or using \label (\appendices by itself
% starts a section numbered zero.)
%

%\appendices
%\section{Proof of the First Zonklar Equation}
%Some text for the appendix.

% use section* for acknowledgement
%\section*{Acknowledgment}

%The authors would like to thank...

% Can use something like this to put references on a page
% by themselves when using endfloat and the captionsoff option.
\ifCLASSOPTIONcaptionsoff
  \newpage
\fi

% trigger a \newpage just before the given reference
% number - used to balance the columns on the last page
% adjust value as needed - may need to be readjusted if
% the document is modified later
%\IEEEtriggeratref{8}
% The "triggered" command can be changed if desired:
%\IEEEtriggercmd{\enlargethispage{-5in}}

% references section

% can use a bibliography generated by BibTeX as a .bbl file
% BibTeX documentation can be easily obtained at:
% http://www.ctan.org/tex-archive/biblio/bibtex/contrib/doc/
% The IEEEtran BibTeX style support page is at:
% http://www.michaelshell.org/tex/ieeetran/bibtex/
%\bibliographystyle{IEEEtran}
% argument is your BibTeX string definitions and bibliography database(s)
%\bibliography{IEEEabrv,../bib/paper}
%
% <OR> manually copy in the resultant .bbl file
% set second argument of \begin to the number of references
% (used to reserve space for the reference number labels box)
% \begin{thebibliography}{1}

% \bibitem{IEEEhowto:kopka}
% H.~Kopka and P.~W. Daly, \emph{A Guide to \LaTeX}, 3rd~ed.\hskip 1em plus
%   0.5em minus 0.4em\relax Harlow, England: Addison-Wesley, 1999.

% \end{thebibliography}

\bibliographystyle{IEEEtran}
% argument is your BibTeX string definitions and bibliography database(s)
\bibliography{IEEEabrv,waveform}

% biography section
% 
% If you have an EPS/PDF photo (graphicx package needed) extra braces are
% needed around the contents of the optional argument to biography to prevent
% the LaTeX parser from getting confused when it sees the complicated
% \includegraphics command within an optional argument. (You could create
% your own custom macro containing the \includegraphics command to make things
% simpler here.)
%\begin{biography}[{\includegraphics[width=1in,height=1.25in,clip,keepaspectratio]{mshell}}]{Michael Shell}
% or if you just want to reserve a space for a photo:

% \begin{IEEEbiography}[{\includegraphics[width=1in,height=1.25in,clip,keepaspectratio]{picture}}]{John Doe}
% \blindtext
% \end{IEEEbiography}

% You can push biographies down or up by placing
% a \vfill before or after them. The appropriate
% use of \vfill depends on what kind of text is
% on the last page and whether or not the columns
% are being equalized.

%\vfill

% Can be used to pull up biographies so that the bottom of the last one
% is flush with the other column.
%\enlargethispage{-5in}

% that's all folks
\end{document}